\newtheorem{theorem}{Theorem}
\newtheorem{remark}[theorem]{Remark}
\DeclareMathOperator{\Tr}{Tr}
\DeclareMathOperator*{\argmin}{arg\,min}
\title{\LARGE \bf
Deep Graphic FBSDEs for Opinion Dynamics Stochastic Control
}
\author{Tianrong Chen, Ziyi Wang, Evangelos A. Theodorou
\thanks{Georgia Institute of Technology, Atlanta, GA, USA.
        Email:{\tt\small tchen429@gatech.edu}}%
}
\begin{document}

\maketitle
\thispagestyle{empty}
\pagestyle{empty}
\newcommand{\norm}[1]{\left\lVert#1\right\rVert}
\newcommand{\T}{\mathsf{T}}
\def\reta{{\textnormal{$\eta$}}}
\def\ra{{\textnormal{a}}}
\def\rb{{\textnormal{b}}}
\def\rc{{\textnormal{c}}}
\def\rd{{\textnormal{d}}}
\def\re{{\textnormal{e}}}
\def\rf{{\textnormal{f}}}
\def\rg{{\textnormal{g}}}
\def\rh{{\textnormal{h}}}
\def\ri{{\textnormal{i}}}
\def\rj{{\textnormal{j}}}
\def\rk{{\textnormal{k}}}
\def\rl{{\textnormal{l}}}
\def\rn{{\textnormal{n}}}
\def\ro{{\textnormal{o}}}
\def\rp{{\textnormal{p}}}
\def\rq{{\textnormal{q}}}
\def\rr{{\textnormal{r}}}
\def\rs{{\textnormal{s}}}
\def\rt{{\textnormal{t}}}
\def\ru{{\textnormal{u}}}
\def\rv{{\textnormal{v}}}
\def\rw{{\textnormal{w}}}
\def\rx{{\textnormal{x}}}
\def\ry{{\textnormal{y}}}
\def\rz{{\textnormal{z}}}

\def\rvepsilon{{\mathbf{\epsilon}}}
\def\rvtheta{{\mathbf{\theta}}}
\def\rva{{\mathbf{a}}}
\def\rvb{{\mathbf{b}}}
\def\rvc{{\mathbf{c}}}
\def\rvd{{\mathbf{d}}}
\def\rve{{\mathbf{e}}}
\def\rvf{{\mathbf{f}}}
\def\rvg{{\mathbf{g}}}
\def\rvh{{\mathbf{h}}}
\def\rvu{{\mathbf{i}}}
\def\rvj{{\mathbf{j}}}
\def\rvk{{\mathbf{k}}}
\def\rvl{{\mathbf{l}}}
\def\rvm{{\mathbf{m}}}
\def\rvn{{\mathbf{n}}}
\def\rvo{{\mathbf{o}}}
\def\rvp{{\mathbf{p}}}
\def\rvq{{\mathbf{q}}}
\def\rvr{{\mathbf{r}}}
\def\rvs{{\mathbf{s}}}
\def\rvt{{\mathbf{t}}}
\def\rvu{{\mathbf{u}}}
\def\rvv{{\mathbf{v}}}
\def\rvw{{\mathbf{w}}}
\def\rvx{{\mathbf{x}}}
\def\rvX{{\mathbf{X}}}
\def\rvy{{\mathbf{y}}}
\def\rvY{{\mathbf{Y}}}
\def\rvz{{\mathbf{z}}}

\def\erva{{\textnormal{a}}}
\def\ervb{{\textnormal{b}}}
\def\ervc{{\textnormal{c}}}
\def\ervd{{\textnormal{d}}}
\def\erve{{\textnormal{e}}}
\def\ervf{{\textnormal{f}}}
\def\ervg{{\textnormal{g}}}
\def\ervh{{\textnormal{h}}}
\def\ervi{{\textnormal{i}}}
\def\ervj{{\textnormal{j}}}
\def\ervk{{\textnormal{k}}}
\def\ervl{{\textnormal{l}}}
\def\ervm{{\textnormal{m}}}
\def\ervn{{\textnormal{n}}}
\def\ervo{{\textnormal{o}}}
\def\ervp{{\textnormal{p}}}
\def\ervq{{\textnormal{q}}}
\def\ervr{{\textnormal{r}}}
\def\ervs{{\textnormal{s}}}
\def\ervt{{\textnormal{t}}}
\def\ervu{{\textnormal{u}}}
\def\ervv{{\textnormal{v}}}
\def\ervw{{\textnormal{w}}}
\def\ervx{{\textnormal{x}}}
\def\ervy{{\textnormal{y}}}
\def\ervz{{\textnormal{z}}}

\def\rmA{{\mathbf{A}}}
\def\rmB{{\mathbf{B}}}
\def\rmC{{\mathbf{C}}}
\def\rmD{{\mathbf{D}}}
\def\rmE{{\mathbf{E}}}
\def\rmF{{\mathbf{F}}}
\def\rmG{{\mathbf{G}}}
\def\rmH{{\mathbf{H}}}
\def\rmI{{\mathbf{I}}}
\def\rmJ{{\mathbf{J}}}
\def\rmK{{\mathbf{K}}}
\def\rmL{{\mathbf{L}}}
\def\rmM{{\mathbf{M}}}
\def\rmN{{\mathbf{N}}}
\def\rmO{{\mathbf{O}}}
\def\rmP{{\mathbf{P}}}
\def\rmQ{{\mathbf{Q}}}
\def\rmR{{\mathbf{R}}}
\def\rmS{{\mathbf{S}}}
\def\rmT{{\mathbf{T}}}
\def\rmU{{\mathbf{U}}}
\def\rmV{{\mathbf{V}}}
\def\rmW{{\mathbf{W}}}
\def\rmX{{\mathbf{X}}}
\def\rmY{{\mathbf{Y}}}
\def\rmZ{{\mathbf{Z}}}

\def\ermA{{\textnormal{A}}}
\def\ermB{{\textnormal{B}}}
\def\ermC{{\textnormal{C}}}
\def\ermD{{\textnormal{D}}}
\def\ermE{{\textnormal{E}}}
\def\ermF{{\textnormal{F}}}
\def\ermG{{\textnormal{G}}}
\def\ermH{{\textnormal{H}}}
\def\ermI{{\textnormal{I}}}
\def\ermJ{{\textnormal{J}}}
\def\ermK{{\textnormal{K}}}
\def\ermL{{\textnormal{L}}}
\def\ermM{{\textnormal{M}}}
\def\ermN{{\textnormal{N}}}
\def\ermO{{\textnormal{O}}}
\def\ermP{{\textnormal{P}}}
\def\ermQ{{\textnormal{Q}}}
\def\ermR{{\textnormal{R}}}
\def\ermS{{\textnormal{S}}}
\def\ermT{{\textnormal{T}}}
\def\ermU{{\textnormal{U}}}
\def\ermV{{\textnormal{V}}}
\def\ermW{{\textnormal{W}}}
\def\ermX{{\textnormal{X}}}
\def\ermY{{\textnormal{Y}}}
\def\ermZ{{\textnormal{Z}}}

\def\vzero{{\bm{0}}}
\def\vone{{\bm{1}}}
\def\vmu{{\bm{\mu}}}
\def\vtheta{{\bm{\theta}}}
\def\va{{\bm{a}}}
\def\vb{{\bm{b}}}
\def\vc{{\bm{c}}}
\def\vd{{\bm{d}}}
\def\ve{{\bm{e}}}
\def\vf{{\bm{f}}}
\def\vg{{\bm{g}}}
\def\vh{{\bm{h}}}
\def\vi{{\bm{i}}}
\def\vj{{\bm{j}}}
\def\vk{{\bm{k}}}
\def\vl{{\bm{l}}}
\def\vm{{\bm{m}}}
\def\vn{{\bm{n}}}
\def\vo{{\bm{o}}}
\def\vp{{\bm{p}}}
\def\vq{{\bm{q}}}
\def\vr{{\bm{r}}}
\def\vs{{\bm{s}}}
\def\vt{{\bm{t}}}
\def\vu{{\bm{u}}}
\def\vv{{\bm{v}}}
\def\vw{{\bm{w}}}
\def\vx{{\bm{x}}}
\def\vy{{\bm{y}}}
\def\vz{{\bm{z}}}

\def\evalpha{{\alpha}}
\def\evbeta{{\beta}}
\def\evepsilon{{\epsilon}}
\def\evlambda{{\lambda}}
\def\evomega{{\omega}}
\def\evmu{{\mu}}
\def\evpsi{{\psi}}
\def\evsigma{{\sigma}}
\def\evtheta{{\theta}}
\def\eva{{a}}
\def\evb{{b}}
\def\evc{{c}}
\def\evd{{d}}
\def\eve{{e}}
\def\evf{{f}}
\def\evg{{g}}
\def\evh{{h}}
\def\evi{{i}}
\def\evj{{j}}
\def\evk{{k}}
\def\evl{{l}}
\def\evm{{m}}
\def\evn{{n}}
\def\evo{{o}}
\def\evp{{p}}
\def\evq{{q}}
\def\evr{{r}}
\def\evs{{s}}
\def\evt{{t}}
\def\evu{{u}}
\def\evv{{v}}
\def\evw{{w}}
\def\evx{{x}}
\def\evy{{y}}
\def\evz{{z}}

\def\mA{{\bm{A}}}
\def\mB{{\bm{B}}}
\def\mC{{\bm{C}}}
\def\mD{{\bm{D}}}
\def\mE{{\bm{E}}}
\def\mF{{\bm{F}}}
\def\mG{{\bm{G}}}
\def\mH{{\bm{H}}}
\def\mI{{\bm{I}}}
\def\mJ{{\bm{J}}}
\def\mK{{\bm{K}}}
\def\mL{{\bm{L}}}
\def\mM{{\bm{M}}}
\def\mN{{\bm{N}}}
\def\mO{{\bm{O}}}
\def\mP{{\bm{P}}}
\def\mQ{{\bm{Q}}}
\def\mR{{\bm{R}}}
\def\mS{{\bm{S}}}
\def\mT{{\bm{T}}}
\def\mU{{\bm{U}}}
\def\mV{{\bm{V}}}
\def\mW{{\bm{W}}}
\def\mX{{\bm{X}}}
\def\mY{{\bm{Y}}}
\def\mZ{{\bm{Z}}}
\def\mBeta{{\bm{\beta}}}
\def\mPhi{{\bm{\Phi}}}
\def\mLambda{{\bm{\Lambda}}}
\def\mSigma{{\bm{\Sigma}}}

\newcommand{\tens}[1]{\bm{\mathsfit{#1}}}
\def\tA{{\tens{A}}}
\def\tB{{\tens{B}}}
\def\tC{{\tens{C}}}
\def\tD{{\tens{D}}}
\def\tE{{\tens{E}}}
\def\tF{{\tens{F}}}
\def\tG{{\tens{G}}}
\def\tH{{\tens{H}}}
\def\tI{{\tens{I}}}
\def\tJ{{\tens{J}}}
\def\tK{{\tens{K}}}
\def\tL{{\tens{L}}}
\def\tM{{\tens{M}}}
\def\tN{{\tens{N}}}
\def\tO{{\tens{O}}}
\def\tP{{\tens{P}}}
\def\tQ{{\tens{Q}}}
\def\tR{{\tens{R}}}
\def\tS{{\tens{S}}}
\def\tT{{\tens{T}}}
\def\tU{{\tens{U}}}
\def\tV{{\tens{V}}}
\def\tW{{\tens{W}}}
\def\tX{{\tens{X}}}
\def\tY{{\tens{Y}}}
\def\tZ{{\tens{Z}}}

\def\gA{{\mathcal{A}}}
\def\gB{{\mathcal{B}}}
\def\gC{{\mathcal{C}}}
\def\gD{{\mathcal{D}}}
\def\gE{{\mathcal{E}}}
\def\gF{{\mathcal{F}}}
\def\gG{{\mathcal{G}}}
\def\gH{{\mathcal{H}}}
\def\gI{{\mathcal{I}}}
\def\gJ{{\mathcal{J}}}
\def\gK{{\mathcal{K}}}
\def\gL{{\mathcal{L}}}
\def\gM{{\mathcal{M}}}
\def\gN{{\mathcal{N}}}
\def\gO{{\mathcal{O}}}
\def\gP{{\mathcal{P}}}
\def\gQ{{\mathcal{Q}}}
\def\gR{{\mathcal{R}}}
\def\gS{{\mathcal{S}}}
\def\gT{{\mathcal{T}}}
\def\gU{{\mathcal{U}}}
\def\gV{{\mathcal{V}}}
\def\gW{{\mathcal{W}}}
\def\gX{{\mathcal{X}}}
\def\gY{{\mathcal{Y}}}
\def\gZ{{\mathcal{Z}}}

\def\sA{{\mathbb{A}}}
\def\sB{{\mathbb{B}}}
\def\sC{{\mathbb{C}}}
\def\sD{{\mathbb{D}}}
\def\sF{{\mathbb{F}}}
\def\sG{{\mathbb{G}}}
\def\sH{{\mathbb{H}}}
\def\sI{{\mathbb{I}}}
\def\sJ{{\mathbb{J}}}
\def\sK{{\mathbb{K}}}
\def\sL{{\mathbb{L}}}
\def\sM{{\mathbb{M}}}
\def\sN{{\mathbb{N}}}
\def\sO{{\mathbb{O}}}
\def\sP{{\mathbb{P}}}
\def\sQ{{\mathbb{Q}}}
\def\sR{{\mathbb{R}}}
\def\sS{{\mathbb{S}}}
\def\sT{{\mathbb{T}}}
\def\sU{{\mathbb{U}}}
\def\sV{{\mathbb{V}}}
\def\sW{{\mathbb{W}}}
\def\sX{{\mathbb{X}}}
\def\sY{{\mathbb{Y}}}
\def\sZ{{\mathbb{Z}}}

\def\emLambda{{\Lambda}}
\def\emA{{A}}
\def\emB{{B}}
\def\emC{{C}}
\def\emD{{D}}
\def\emE{{E}}
\def\emF{{F}}
\def\emG{{G}}
\def\emH{{H}}
\def\emI{{I}}
\def\emJ{{J}}
\def\emK{{K}}
\def\emL{{L}}
\def\emM{{M}}
\def\emN{{N}}
\def\emO{{O}}
\def\emP{{P}}
\def\emQ{{Q}}
\def\emR{{R}}
\def\emS{{S}}
\def\emT{{T}}
\def\emU{{U}}
\def\emV{{V}}
\def\emW{{W}}
\def\emX{{X}}
\def\emY{{Y}}
\def\emZ{{Z}}
\def\emSigma{{\Sigma}}

\newcommand{\etens}[1]{\mathsfit{#1}}
\def\etLambda{{\etens{\Lambda}}}
\def\etA{{\etens{A}}}
\def\etB{{\etens{B}}}
\def\etC{{\etens{C}}}
\def\etD{{\etens{D}}}
\def\etE{{\etens{E}}}
\def\etF{{\etens{F}}}
\def\etG{{\etens{G}}}
\def\etH{{\etens{H}}}
\def\etI{{\etens{I}}}
\def\etJ{{\etens{J}}}
\def\etK{{\etens{K}}}
\def\etL{{\etens{L}}}
\def\etM{{\etens{M}}}
\def\etN{{\etens{N}}}
\def\etO{{\etens{O}}}
\def\etP{{\etens{P}}}
\def\etQ{{\etens{Q}}}
\def\etR{{\etens{R}}}
\def\etS{{\etens{S}}}
\def\etT{{\etens{T}}}
\def\etU{{\etens{U}}}
\def\etV{{\etens{V}}}
\def\etW{{\etens{W}}}
\def\etX{{\etens{X}}}
\def\etY{{\etens{Y}}}
\def\etZ{{\etens{Z}}}

\begin{abstract}

In this paper, we present a scalable deep learning approach to solve opinion dynamics stochastic optimal control problems with mean field term coupling in the dynamics and cost function. Our approach relies on the probabilistic representation of the solution of the Hamilton-Jacobi-Bellman partial differential equation.  Grounded on the nonlinear version of the Feynman-Kac lemma,  the solutions of the  Hamilton-Jacobi-Bellman partial differential equation are linked to the solution of a set of Forward-Backward Stochastic Differential Equations.  These equations can be solved numerically using a novel deep neural network with architecture tailored to the problem in consideration. The resulting algorithm is tested on a polarized opinion consensus experiment. We showcase the scalability and generalizability of our algorithm on a 10k agent experiment. The proposed framework opens up the possibility for future applications on extremely large-scale problems.

\end{abstract}

\section{INTRODUCTION}
With the fast development of social media and their enormous impact on our societies, there is an increasing interest to gain a deeper understanding of the underlying mechanisms of network-enabled opinion dynamics. An appealing feature of opinion dynamics is the interaction and the exchange of opinion which occurs among individual agents or groups of agents. Hence, from a computational perspective the opinion dynamics can be characterized as simulating the evolution of agents' opinion over time under the interaction with their peers. Agent-based model in which the opinion is represented as a real value, such as DeGroot \cite{degroot1974reaching}, Friedkin and Johnsen (FJ) model \cite{friedkin1990social} and their variants, have achieved great empirical success. Interestingly, the phenomena of emulation, herding behavior and polarization occur when one describes the opinion exchange between agents as a graph \cite{wang2021robust}.

In the recent years, Mean field (MF) game and control have become critical tools for analyzing large-population system. In MF game and control, the interaction between indistinguishable agents are negligible though the aggregated influence accumulated among agents is significant. The corresponding MF social opinion optimal control problem is aiming to optimise the sum of agents' cost, which is also known as social cost \cite{wang2021robust} in order to steer the social opinion \cite{zarezade2017steering}, reach consensus \cite{albi2015optimal}, etc. The research was initiated by \cite{huang2012social} in which the Social Certainty Equivalence method was proposed. The work in \cite{wang2017social} adopted stochastic jump diffusion in the opinion dynamics and developed asymptotic team-optimal solution. In \cite{huang2019linear}, the authors consider the scenario where major agents exist and formulated the problem by forward backward stochastic differential equations theory.

From the perspective of stochastic optimal control theory, the MF game and control can be associated with a Hamilton-Jacobi-Bellman (HJB) equation, which is a nonlinear, second-order Partial Differential Equation (PDE).  Different methodologies exist for solving the aforementioned HJB PDE using sampling. In particular, the Path integral control methodology \cite{kappen2005path} leverages Cole-Hopf transformation to obtain a linear PDE representation of the initial nonlinear HJB PDE. The solution of this linear PDE is given  by the linear Feynman-Kac Theorem \cite{karatzas2012brownian}. While the Path Integral control approach can handle nonlinear stochastic control problems, it relies on assumptions between  control authority and the variance and type of the noise in the dynamics. These assumptions restrict the applicability of  Path Integral control to specific classes of problems. 

Stochastic control methods based on the nonlinear Feynman-Kac lemma \cite{pardoux2014stochastic} overcome the limitations of Path Integral control by representing the HJB PDE as a  system of coupled Forward and Backward Stochastic Differential Equations (FBSDEs). These representations hold for general classes of PDEs that arise in stochastic optimal control.  Recent  work \cite{han2018solving} incorporates deep learning into the FBSDEs representation and demonstrates feasibility  and numerical efficiency. The key idea is to represent the FBSDEs system  with a deep neural network architecture, which can be trained with back-propagation through time. The resulting algorithm is known as \emph{Deep FBSDEs}. The work in \cite{pereira2019learning} and \cite{wang2019deep} improve on the Deep-FBSDE approach by incorporating importance sampling within the FBSDEs representation and using a novel and scalable DNN architecture  based on  Long Short-Term Memory neural networks. The proposed algorithm overcomes  limitations of work in \cite{han2018solving} in terms of scalability and training speed. The framework is further extended to the stochastic differential game using fictitious play in \cite{han2020deep} supported by theoretical analysis \cite{han2020convergence}. \cite{chen2021large} improves the scalability of prior work \cite{han2020deep} by leveraging symmetric property in the mean field game and provides theoretical analysis on the important sampling technique in FBSDEs literature.

In this paper, we study the mean field optimal control problem with graphic neighborhood structure under the context of opinion dynamics. Different from \cite{han2020convergence} and \cite{chen2021large} which have access to global information, we consider local graphic information of the state of neighboring agents. In our work, we consider the MF term existing in the cost functional and dynamics which causes the difficulties of control design \cite{huang2012social}. A novel HJB formulation and its corresponding FBSDEs system, named as Deep Graphic FBSDEs (DG-FBSDE), are proposed. In particular, the main contributions of our work is threefold:

\noindent \textbf{1)} We derive a novel HJB PDE and propose associated DG-FBSDEs to handle the opinion dynamics control when MF term is in the cost functional and state dynamics. The underlying DNN backbone architecture relies on Residual Network with time embedding \cite{chen2021large,de2021diffusion}.  Notably, DG-FBSDEs can be generalized to the form in which, the global information is accessible as presented in the prior work \cite{chen2021large}. 

\noindent \textbf{2)}  We demonstrate the scalability of DG-FBSDEs with a 10K agents experiment and the capability of DG-FBSDEs with a polarized opinion problem.

\noindent \textbf{3)}  We illustrate the generalization capability of DG-FBSDEs as a deep learning model. Our model is tested on larger number of agents with the policy trained with few agents. It paves the way for potential large scale applications in the real world.

\section{NOTATIONS AND PROBLEM FORMULATION}

Let $\rmX$  denotes the random variable representing the state of all agents and $\rmX_i$ denote random variable for $i$th agent. The notation	$\rmX_{-i}$ denotes random variable of the state  all agents except $i$th.  Moreover, let $ V(\rmX,t) $ and twice differentiable value function with $ V_{\rvx \rvx}$ corresponding to the gradient and Hessian  w.r.t $\rmX$

\subsection{Problem Formulation}
The opinion of an individual agent in the networked environment can be influenced by neighbors, and through which, its belief is updated. Here we consider the belief update \cite{xu2020paradox} of a network of N agents  in the form of Friedkin and Johnsen (FJ) model \cite{friedkin1990social}:
\begin{equation}\label{eq:opinion-dyn}
   \begin{aligned}
    \rd \rmX_i(t) &= -\alpha(t)\rmX_i(t)\rd t +\alpha(t)f\left(\sA_i\right)\rd t+\Sigma(\rmX_i,t) \rd \rmW_t^{i} \quad \\
    &\quad \quad \quad \quad \quad \text{s.t}\quad \rmX(t_0)=\rvx_{t_0},
   \end{aligned}
 \end{equation}
where $\rmX_i \in \sR^{n_x}$ is the state (opinion) of representative $i$th agent defined on space $\gX$. $\Sigma: \gX \times [t_0, T] \rightarrow \gX \times \sR^{n_w}$ is diffusion term. The state is driven by $n_w$ dimensional Brownian motion which is denoted as $\rmW=(\rmW^{1}_t,..,\rmW^i_t,...,\rmW^N_t)$. $\alpha:\gX \times [t_0, T] \rightarrow  \sR$ and $f:\gX \times [t_0,T] \rightarrow  \gX$ are susceptibility to local influence, and center of bias \cite{xu2020paradox} respectively. $\sA_i$ defines the set of neighbors of $i$th agent:
   \begin{align*}
      \sA_i = \left\{\rmX_j | \norm{\rmX_i-\rmX_j}_2^2\leq \epsilon \right\},
   \end{align*}
 where $\epsilon$ is the radius of neighborhood.
Based on eq.\ref{eq:opinion-dyn}, we consider finite time horizon N-player stochastic differential game with dynamics,
\begin{equation}\label{eq:orig_prob_local}
  \begin{aligned}
   \rd \rmX_i(t) &= -\alpha(t)\rmX_i(t)\rd t +\alpha(t)f\left(\sA_i\right)\rd t\\
   &+ G(\rmX_i,t)\rmU_i(t)\rd t +\Sigma(\rmX_i,t) \rd \rmW_t^i, \ \ \text{s.t}\ \rmX(t_0)=\rvx_{t_0},\\
  \end{aligned}
\end{equation}
where $\rmU_i \in \gU \subseteq \sR^{n_u}$ is control for representative $i$th agent and $G(t): \mathcal{X} \times [t_0, T] \rightarrow \gX \times \gU$ is the actuator dynamics. Interestingly, eq.\ref{eq:orig_prob_local} can be interpreted as the Mckean-Vlasov Stochastic Differential Equation \cite{carmona2015forward} which is broadly investigated in Mean Field Game (MFG) literature \cite{carmona2013control}.

Following  \cite{wang2021robust}, the objective function for each individual agent is formulated as follows: 
\begin{equation}\label{eq:ind-obj}
   \begin{aligned}
      \gJ_i(\rvx, \rvu) = \mathbb{E} \int_{t_0}^{T}\norm{\rmX_i(t) -\Gamma \rmX^{(\sA)}_i(t)}_Q^{2}+ \frac{1}{2}\norm{\rmU_i(t)}_{R}^2 \rd t\\
      +\mathbb{E}\norm{\rmU_i(T)-\Gamma \rmX_i^{(\sA)}(T)}_Q^2,
   \end{aligned}
 \end{equation}
 where $\Gamma$ is a constant coefficient controlling the importance of neighbor's opinion and  $\rmX_i^{(\sA)}(t)$ is the averaging opinion of neighborhood of $i$th agent defined as follows: 
   \begin{align*}
      \rmX_i^{(\sA)}(t)=\frac{1}{|\sA_i|}\sum_{\rmX_j \in \sA_i} \rmX_j(t).
   \end{align*}
 The social cost is defined by the expression: 
 \begin{equation}\label{eq:soc-obj}
   \begin{aligned}
      \gJ_{soc}(\rmX,\rmU)=\sum_{i=1}^{N}\gJ_{i}(\rmX,\rmU),
   \end{aligned}
 \end{equation}
where $N$ is the total number of agents.
The individual agent is aiming to find the minimum social cost given the local neighbors information. Therefore the optimal control is: 
\begin{equation}\label{eq:ind-soc-obj}
   \begin{aligned}
      \rmU^{*}_i=\argmin_{\rmU_i} \gJ_{soc}(\rmX,\rmU).
   \end{aligned}
 \end{equation}

\section{Decoupling the Objective Function}
Noticing that the social objective function (eq.\ref{eq:soc-obj}) is coupled with all the agents, the actual objective for an individual agent to optimize is not clear. Intuitively, the social cost function contains terms that the representative agent $i$ is unable to affect (i.e. agent $i$ cannot influence the states of agents which are not inside the neighborhood through interaction.). In this section, we are going to transform the original optimization problem (eq.\ref{eq:ind-soc-obj}) into a more favorable form, in which the irrelevant terms will be dropped. One can decouple the social objective function into the terms related and independent w.r.t the representative agent $i$ as in Theorem \ref{thm:graph-obj}.
\begin{theorem}\label{thm:graph-obj}
   The social objective function in \eqref{eq:soc-obj} can be decomposed as:
   \begin{equation}
      \begin{aligned}
         \gJ_{i}(\rmX,\rmU)=&\mathbb{E}\bigg[\int_{t_0}^{T}\gL_i(\rmX,\rmU,t)+\gL_{-i}(\rmX,\rmU,t)\rd t\bigg]\\
         &+\mathbb{E}\left[\gC_i(\rmX,T)+\gC_{-i}(\rmX,T)\right],
      \end{aligned}
    \end{equation}
    where:
    \begin{equation}
      \begin{aligned}
         &\gL_{i}(\rmX_i, \rmU_i,t;\rvx_{-i})=\rmX_i^\T S_1 \rmX_i+S_2\rmX_i+\frac{1}{2}\norm{\rmU_i}_{R}^2,\\
         &\gC_i(\rmX,T;\rmX_{-i})=\rmX_i^\T S_1 \rmX_i+S_2\rmX_i,\\
         &S_1=\bar{s}_1\Gamma^\T Q\Gamma\bar{s}_1+Q,\\
         &S_2=-2\Gamma^\T\left(\rmX^{(\sA)}_i(t) ^\T Q+\bar{s}_2 Q+\bar{s}_3Q\Gamma \right),\\
         &\bar{s}_1=\sqrt{\sum_{\rmX_k\in\sA_i(t)}\frac{1}{|\sA_k(t)|^2}},~~ \bar{s}_2=\sum_{\rmX_k \in \sA_i(t)}\frac{1}{|\sA_{k}(t)|}\rmX_k\\
         & \bar{s}_3=\sum_{\rmX_k\in \sA_{i}(t)}\frac{1}{|\sA_k(t)|}\rmX_k^{(\sA/i)}(t),
      \end{aligned}
    \end{equation}
    where $\rmX_k^{(\sA/i)}(t):=\frac{1}{|\sA_k|}\sum_{\rmX_n \in \sA_{k/i}}\rmX_n$ represents the scaled average of the opinions of $k$th agent's neighbors except agent $i$. $\gL_{-i}$ and $\gC_{-i}$ are independent of agent $i$. 
\end{theorem}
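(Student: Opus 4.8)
\noindent\emph{Proof sketch.} The plan is to begin from the social cost $\gJ_{soc}=\sum_{j=1}^{N}\gJ_j$ and to gather, for a fixed representative agent $i$, every contribution that depends on the state $\rmX_i$ (the only state $i$ can steer through $\rmU_i$). These pieces will assemble into $\gL_i$ and $\gC_i$, while the remaining terms, being constant from agent $i$'s viewpoint, are collected into $\gL_{-i}$ and $\gC_{-i}$. The structural fact that drives everything is that $\gJ_j$ sees $\rmX_i$ only through $\rmX_j$ (when $j=i$) and through the average $\rmX_j^{(\sA)}$ (when $i\in\sA_j$). Because the neighborhood is defined by a symmetric distance, $i\in\sA_j\iff\norm{\rmX_i-\rmX_j}_2^2\le\epsilon\iff j\in\sA_i$, so the agents whose cost depends on $\rmX_i$ are exactly those in $\sA_i$ (which contains $i$ itself). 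This is precisely why the coefficients $\bar{s}_1,\bar{s}_2,\bar{s}_3$ are all sums over $\sA_i$ weighted by the neighbor sizes $|\sA_k|$.

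First I would expand each weighted norm into its quadratic form,
\begin{equation*}
\norm{\rmX_j-\Gamma\rmX_j^{(\sA)}}_Q^2=\rmX_j^\T Q\rmX_j-2\rmX_j^\T Q\Gamma\rmX_j^{(\sA)}+(\rmX_j^{(\sA)})^\T\Gamma^\T Q\Gamma\rmX_j^{(\sA)},
\end{equation*}
and, for each $k\in\sA_i$, isolate the state $\rmX_i$ inside the average by writing $\rmX_k^{(\sA)}=\tfrac{1}{|\sA_k|}\rmX_i+\rmX_k^{(\sA/i)}$. Substituting this splitting and summing the expansion over $k\in\sA_i$ rewrites the $\rmX_i$-dependent portion of $\gJ_{soc}$ as a quadratic polynomial in $\rmX_i$ whose coefficients are the claimed $S_1$ and $S_2$, plus a remainder carrying no $\rmX_i$.

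Next I would collect by degree in $\rmX_i$. The quadratic part picks up $\rmX_i^\T Q\rmX_i$ from the $j=i$ summand and $\big(\sum_{k\in\sA_i}\tfrac{1}{|\sA_k|^2}\big)\,\rmX_i^\T\Gamma^\T Q\Gamma\rmX_i$ from the averaged squared terms; since $\bar{s}_1^2=\sum_{k\in\sA_i}\tfrac{1}{|\sA_k|^2}$ this is exactly $\rmX_i^\T S_1\rmX_i$ with $S_1=\bar{s}_1\Gamma^\T Q\Gamma\bar{s}_1+Q$. The linear part receives three contributions: the cross term of $\gJ_i$ supplies the $\rmX_i^{(\sA)}$ piece, the cross terms $-2\,\tfrac{1}{|\sA_k|}\rmX_k^\T Q\Gamma\rmX_i$ summed over $k\in\sA_i$ assemble $\bar{s}_2$, and the cross terms coming from the averaged squared terms assemble $\bar{s}_3$ through $\rmX_k^{(\sA/i)}$, which together give $S_2=-2\Gamma^\T\big((\rmX_i^{(\sA)})^\T Q+\bar{s}_2 Q+\bar{s}_3 Q\Gamma\big)$. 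Adjoining the single control penalty $\tfrac12\norm{\rmU_i}_R^2$ present in $\gJ_{soc}$ produces $\gL_i$, and repeating the identical algebra at $t=T$ without the control term produces $\gC_i$; all terms free of $\rmX_i$ and $\rmU_i$ fall into $\gL_{-i}$ and $\gC_{-i}$.

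The main obstacle is the bookkeeping created by the self-membership $i\in\sA_i$ together with the fact that $\rmX_i$ appears simultaneously as the central state and inside every neighbor's average $\rmX_k^{(\sA)}$. The splitting $\rmX_k^{(\sA)}=\tfrac{1}{|\sA_k|}\rmX_i+\rmX_k^{(\sA/i)}$ and the symmetry of $\sA$ are exactly what collapse the resulting double sums into the compact scalars $\bar{s}_1,\bar{s}_2$ and the averaged quantity $\bar{s}_3$; keeping the weights $\tfrac{1}{|\sA_k|}$, the cross-term factors of two, and the signs mutually consistent is the delicate step where I would concentrate the effort.
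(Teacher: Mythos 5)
Your proposal is correct and follows essentially the same route as the paper's own proof: expand each weighted norm $\norm{\rmX_j-\Gamma\rmX_j^{(\sA)}}_Q^2$, split the neighbor average as $\rmX_k^{(\sA)}=\tfrac{1}{|\sA_k|}\rmX_i+\rmX_k^{(\sA/i)}$ to isolate $\rmX_i$, use the symmetry $i\in\sA_j\iff j\in\sA_i$ (the paper's equality $(*)$) to collapse the sum to $k\in\sA_i$, and collect by degree in $\rmX_i$ to read off $S_1$, $S_2$, $\bar{s}_1$, $\bar{s}_2$, $\bar{s}_3$. One minor remark: the expansion you describe actually yields the cross term $+\tfrac{2\Gamma^2}{|\sA_k|}(\rmX_k^{(\sA/i)})^\T Q\rmX_i$, so after factoring out $-2\Gamma^\T$ the $\bar{s}_3$ contribution enters with a \emph{minus} sign (as in the paper's proof), whereas you copied the $+\bar{s}_3 Q\Gamma$ sign from the theorem statement, which is itself inconsistent with the paper's derivation.
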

\begin{proof}
   For individual objective function for agent $i$, one can write the running cost functional at time $t$ as follows:
   \begin{equation}
     \begin{aligned}
       \gJ^{run}_i( \rmX, \rmU, t)  &:= \norm{\rmX_i -\Gamma \rmX^{(\sA)}_i(t)}_Q^{2}+ \frac{1}{2}\norm{\rmU_i}_{R}^2\\
                     &= \underbrace{\norm{\rmX_i}_Q^{2}-2\Gamma \rmX^{(\sA)}_i(t) ^\T Q \rmX_i+\frac{1}{2}\norm{\rmU_i}_{R}^2}_{\phi_i}\\
                     &+\underbrace{\rmX^{(\sA)}_i(t) ^\T\Gamma^\T  Q \Gamma\rmX^{(\sA)}_i(t)}_{\phi^{'}_{i}},
     \end{aligned}
   \end{equation}
   where $\phi_i$ represents for the term related to agent $i$ while $\phi_{i}^{'}$ represents for the terms which are irrelevant. Similarly, for $j$th agent, the running cost reads:
   \begin{align*}
       \gJ^{run}_j(\rmX, \rmU, t)
         &= \norm{\rmX_j -\Gamma \rmX^{(\sA)}_j(t)}_Q^2+\frac{1}{2}\norm{\rmU_j}^2_{R}.\\
   \end{align*}
   If $i$th agent does not appear in $j$th agent's neighbourhood, the relevant term $\phi_i$ will vanish:
   \begin{align*}
       \gJ^{run}_j(\rmX, \rmU, t)
         &= \phi_j+\phi^{'}_{j}=0+\phi^{'}_{j}.
   \end{align*}
   On the contrary, If $i$th agent exists in $j$th agent's neighbourhood, the running cost functional can be decomposed as:
   \begin{equation}
     \begin{aligned}
       &\gJ^{run}_j(\rmX, \rmU, t)\\
           &= \norm{\rvx_j -\Gamma \rvx^{(\sA)}_j}_Q^2+\frac{1}{2}\norm{\rmU_j}^2_{R}\\
           &= \norm{\rmX_j}_Q^{2}-2\Gamma \left(\rmX^{(\sA)}_j\right) ^\T Q \rmX_j+\norm{\Gamma\rvx^{(\sA)}_j}_Q^{2}+\frac{1}{2}\norm{\rmU_j}_{R}^2\\
           &\text{Noticing that $\rvx^{(\sA)}_j=\frac{1}{|\sA_{j}(t)|}\left(\rvx_i+\sum_{k \in \sA_j(t)/i}\rvx_k\right)$}\\
           &= \norm{\rmX_j}_Q^{2}-2\Gamma \left(\frac{1}{|\sA_{j}(t)|}\rvx_i+\rmX_{j}^{(\sA/i)}\right) ^\T Q \rvx_j+\frac{1}{2}\norm{\rmU_j}_{R}^2\\
           &+\Gamma^{2}\norm{\frac{1}{|\sA_{j}(t)|}\rmX_i+\rmX_{j}^{(\sA/i)}}_Q^2 \\
           &= \frac{-2\Gamma}{|\sA_{j}(t)|}\rmX_i^\T Q\rvx_j+\Gamma^2\norm{\frac{1}{|\sA_j(t)|}\rmX_i }_Q^2\\
           &+\frac{2\Gamma^2}{|\sA_{j}(t)|}\rmX_j^{(\sA/i)^\T} Q\rmX_i+\phi^{'}_{j}.\\ 
     \end{aligned}
   \end{equation}
   The cost function $ \gJ^{run}_j(\rvx, \rvu, t)$ can be further written as:
   \begin{equation}
     \begin{aligned}
       &\gJ^{run}_j(\rmX, \rmU, t)\\
         &= \sI\left[\rmX_i \in \sA_j(t)\right](\phi_j)+\phi^{'}_{j}\\
         &=\sI\left[\rmX_i \in \sA_j(t)\right]\left[ \frac{-2\Gamma}{|\sA_{j}(t)|}\rmX_i^\T Q\rmX_j+\Gamma^2\norm{\frac{1}{|\sA_j(t)|}\rmX_i }_Q^2\right..\\
         &+\left. \frac{2\Gamma^2}{|\sA_{j}(t)|}\rmX_j^{(\sA/i)^\T} Q\rmX_i \right]+\phi^{'}_{j},\\
     \end{aligned}
   \end{equation}
   where $\sI$ denotes the identification function.
   By rearranging the terms, we can distill the terms which are related to $i$th agents:
   \begin{equation}
     \begin{aligned}
       &\gL_i(\rmX,\rmU,t)=\sum_{k=1}^{N} \phi_k\\
                       &=\norm{\rmX_i}_Q^{2}-2\Gamma \rmX^{(\sA)}_i(t) ^\T Q \rmX_i+\frac{1}{2}\norm{\rmU_i}_{R}^2\\
                       &+\sum_{k \in \sN/i} \sI\left[i \in \sA_k(t)\right]\left[ \frac{-2\Gamma}{|\sA_{k}(t)|}\rvx_i^\T Q\rmX_k\right.\\
                       &\left.+\frac{\Gamma^2}{|\sA_k(t)|^2}\rvx_i^\T Q \rmX_i+\frac{2\Gamma^2}{|\sA_{j}(t)|}\rmX_j^{(\sA/i)^\T} Q\rmX_i\right]\\
                       &=\norm{\rmX_i}_Q^{2}-2\Gamma \rmX^{(\sA)}_i(t) ^\T Q \rmX_i+\frac{1}{2}\norm{\rmU_i}_{R}^2\\
                       &+\sum_{k \in \sA_i(t)}\left[ \frac{-2\Gamma}{|\sA_{k}(t)|}\rvx_i^\T Q\rmX_k\right.\\
                       &\left.+\frac{\Gamma^2}{|\sA_k(t)|^2}\rmX_i^\T Q \rmX_i+\frac{2\Gamma^2}{|\sA_{j}(t)|}\rmX_j^{(\sA/i)^\T} Q\rmX_i\right]\\
                       &\stackrel{(*)}=\norm{\rmX_i}_Q^{2}-2\Gamma \rmX^{(\sA)}_i(t) ^\T Q \rmX_i+\frac{1}{2}\norm{\rmU_i}_{R}^2\\
                       &-2\Gamma\left(\sum_{k \in \sA_i(t)}\frac{1}{|\sA_{k}(t)|}\rmX_k\right)^\T Q \rmX_i\\
                       &+\Gamma^2\left(\sum_{k\in\sA_i)(t)}\frac{1}{|\sA_k(t)|^2}\right)\rmX_i^\T Q  \rmX_i\\
                       &+2\Gamma^2\left(\sum_{k\in \sA_{i}(t)}\frac{1}{|\sA_k(t)|}\rmX_k^{(\sA/i)}(t)\right)^\T Q \rmX_i\\
                       &=\rmX_i^\T \left(\bar{s_1}\Gamma^\T Q\Gamma\bar{s_1}+Q\right)\rmX_i\\
                       &-2\Gamma^\T\left(\rmX^{(\sA)}_i(t) ^\T Q+\bar{s}_2 Q-\bar{s}_3Q\Gamma \right)\rmX_i+\frac{1}{2}\norm{\rmU_i}_{R}^2\\
                       &=\rmX_i^\T S_1 \rmX_i+S_2^\T\rmX_i+\frac{1}{2}\norm{\rmU_i}_{R}^2,\\
     \end{aligned}
   \end{equation}
   where
   \begin{equation}
     \begin{aligned}
       S_1&=\bar{s_1}\Gamma^\T Q\Gamma\bar{s_1}+Q\\
       S_2&=-2\Gamma^\T\left(\rmX^{(\sA)}_i(t) ^\T Q+\bar{s}_2 Q-\bar{s}_3Q\Gamma \right)^\T\\
       \bar{s}_1&=\sqrt{\sum_{k\in\sA_i)(t)}\frac{1}{|\sA_k(t)|^2}}\\
       \bar{s}_2&=\sum_{k \in \sA_i(t)}\frac{1}{|\sA_{k}(t)|}\rmX_k\\
       \bar{s}_3&=\sum_{k\in \sA_{i}(t)}\frac{1}{|\sA_k(t)|}\rmX_k^{\sA/i}(t).
     \end{aligned}
   \end{equation}
  Equality $(*)$ holds because all agents share identical neighborhood radius $\epsilon$. By repeating the algebra above, The terminal state cost functional related to agent $i$ can be easily obtained as:
  \begin{equation}
    \begin{aligned}
      \gC_i(\rmX,T;\rmX_{-i})=\rmX_i^\T S_1 \rmX_i+S_2\rmX_i.
    \end{aligned}
  \end{equation}
 \end{proof}
 \begin{figure*}%
  \centering
  \subfloat{{\includegraphics[width=0.47\linewidth]{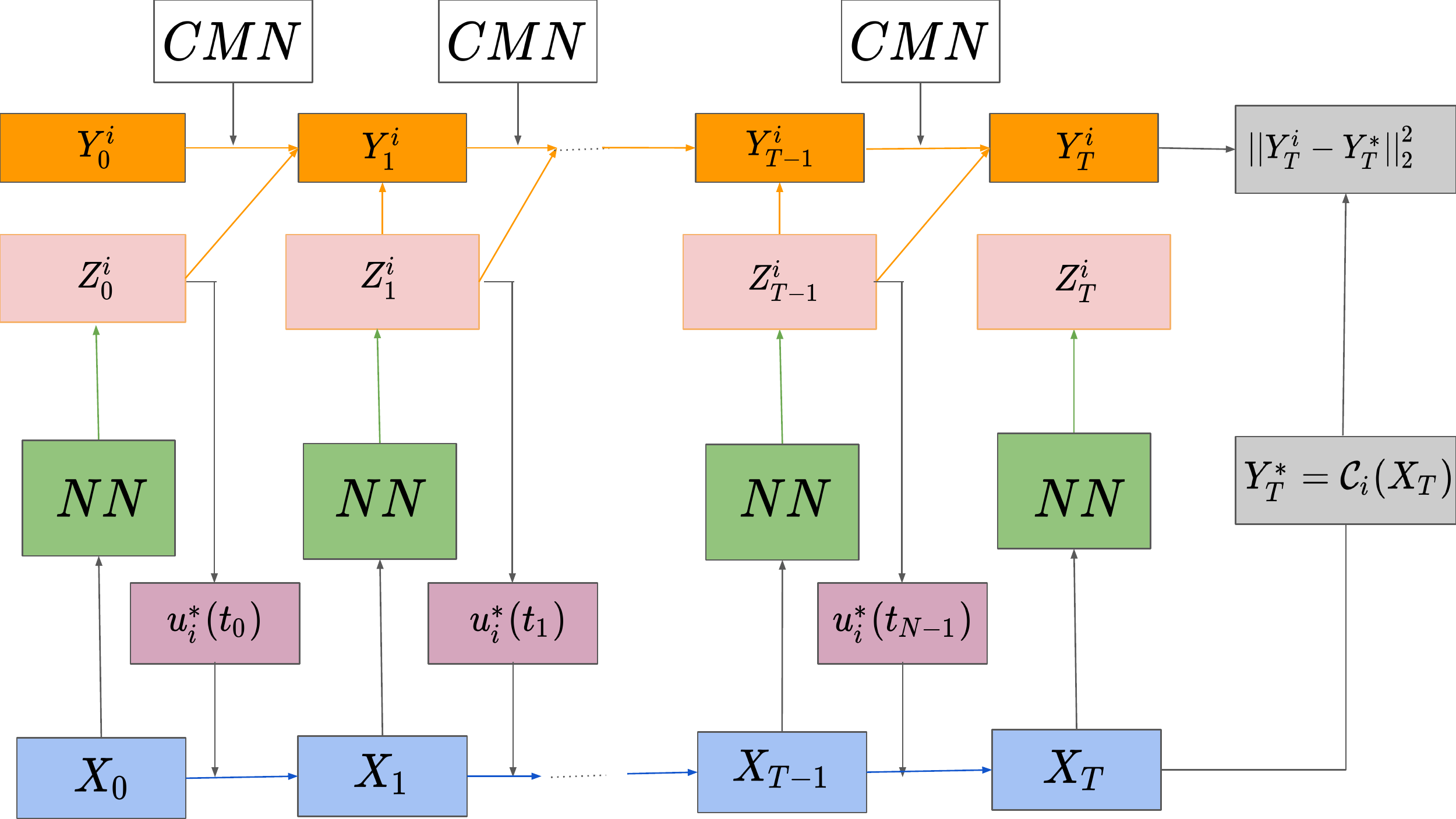} }}%
  \qquad
  \subfloat{{\includegraphics[width=0.47\linewidth]{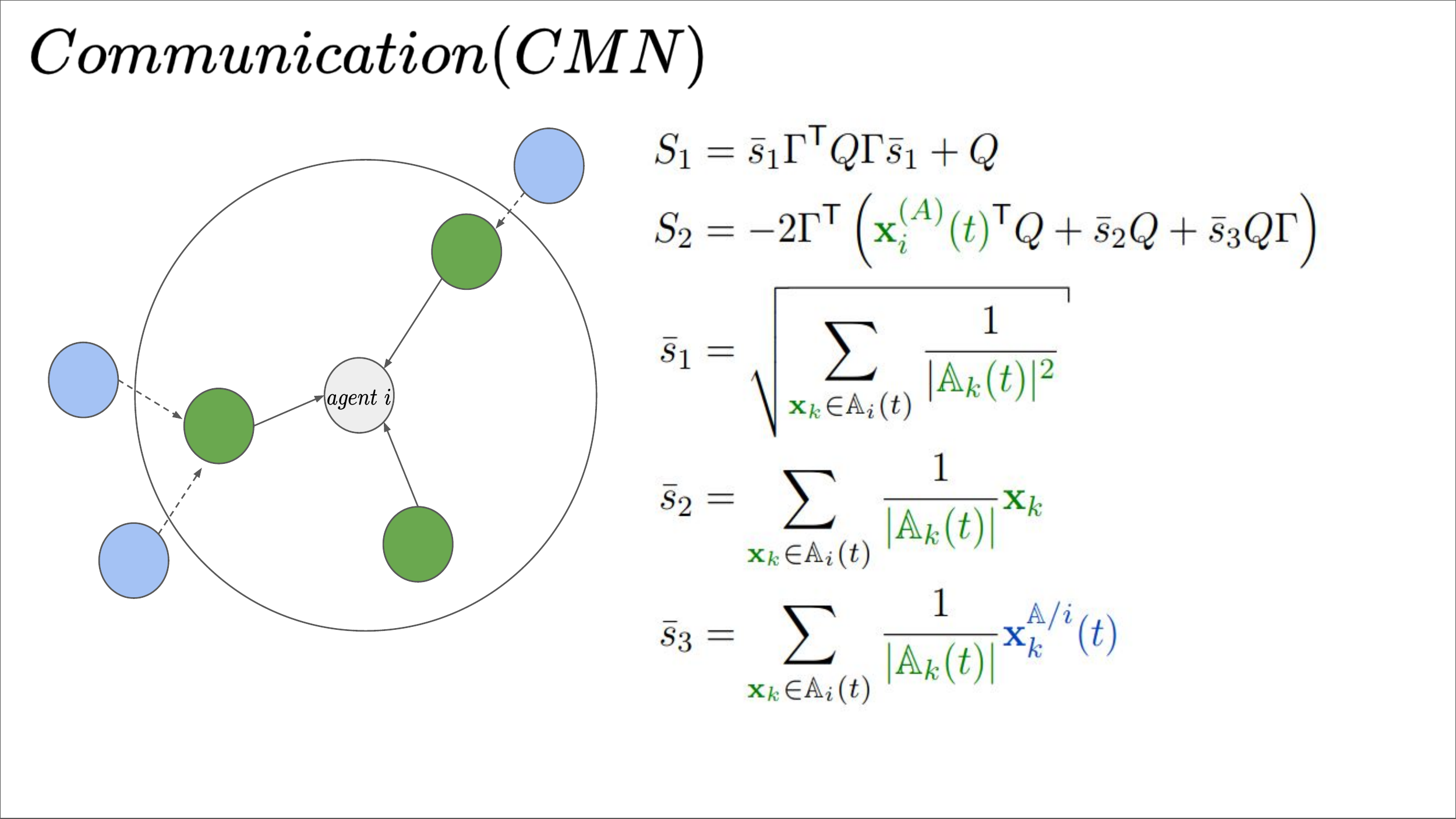} }}%
  \caption{\textbf{Left}: DG-FBSDE network structure. NN represents for neural network backbone. Orange and blue path represent for the propagation of FSDE and BSDE in eq.\ref{eq:IS-FBSDE}. \textbf{Right}: Communication module (CMN) shows information transportation among neighbour agents. Green circles are neighbors of the gray agent, and blue circles are neighbors of green agents which are neighbor's neighbors of agent $i$. The corresponding information in the equations are in the same color.}\label{fig:NN-arch}
\end{figure*}
\section{HJB and FBSDE formulation}
Thanks to Theorem \ref{thm:graph-obj}, the optimal control for each agent can be obtained by solving the simplified optimization problem as follows:
\begin{equation}
   \begin{aligned}
     \rvu^{*}_{i} &=\argmin_{\rmU_i} \gJ_{soc}(\rmX,\rmU)\\
                  &=\argmin_{\rmU_i} \mathbb{E}\int_{t_0}^{T}\gL_i(\rmX,\rmU,t)+\gL_{-i}(\rmX,\rmU,t)\rd t\\
                  &\quad \quad \quad \quad \quad +\mathbb{E}\left[\gC_i(\rmX,T)+\gC_{-i}(\rmX,T)\right]\\
                  &=\argmin_{\rmU_i} \mathbb{E}\int_{t_0}^{T}\gL_i(\rmX,\rmU,t)\rd t+\mathbb{E}\left[\gC_i(\rmX,T)\right]\\                  
                  &\neq \argmin_{\rmU_i} \gJ_i(\rmX,\rmU).
   \end{aligned}
 \end{equation}
\begin{remark}
   The new formulation is optimizing the social cost. On the contrary, the last equation (eq.\ref{eq:ind-obj}) only considers individual interest.
\end{remark}
Hence, the value function for individual agent can be naturally defined as,
\begin{equation}
   \begin{aligned}
     V_t^i(\rmX,t)  &=\inf_{\rmU_i}\mathbb{E}\int_{t_0}^{T}\gL_i(\rmX,\rmU,t)\rd t+\mathbb{E}\left[\gC_i(\rmX,T)\right].\\
   \end{aligned}
 \end{equation}
By leveraging the  stochastic optimal control theory, $V_i$  satisfies  the HJB Equation specified as follows:
\begin{equation}\label{eq:HJB}
  \begin{aligned}
    V^i_t+H(x, V_{\rvx}^i, V_{\rvx\rvx}^{i},t)=0, \quad V(\rvx,T)=\gC_i(\rvx,T),
  \end{aligned}
\end{equation}
where $H$ is known as Hamiltonian in literature. Recalling realized opinion dynamics with controls (eq.\ref{eq:orig_prob_local}), here we denote $F(\rmX_i):=-\alpha(t)\rmX_i(t)\rd t +\alpha(t)f\left(\sA_i\right)$ for simplicity. Knowing the dynamics and objective, the Hamiltonian reads:
\begin{equation}
  \begin{aligned}
    &H(\rvx,V_{\rvx}^i, V_{\rvx\rvx}^{i},t)\\
    &=  \min_{\rvu_i}\left[\gL_i+V_{\rvx}^{i}(F+G \rmU_i)+\frac{1}{2}\Tr(V_{\rvx\rvx}^i\Sigma\Sigma^\T) \right]\\
    &=  \min_{\rvu_i}\left[\left( \rmX_i^\T S_1 \rmX_i+S_2\rmX_i+\frac{1}{2}\norm{\rmU_i}_{R}^2 \right)\right.\\
    &\quad \quad \quad \quad +\left. V_{\rvx}^{i}(F+G\rmU_i)+\frac{1}{2}\Tr(V_{\rvx\rvx}^i\Sigma\Sigma^\T) \right].\\
  \end{aligned}
\end{equation}
By forcing the derivative of the term in the bracket to be zero, one can find the optimal control as:
  \begin{align*}
    \rmU^{*}_i=R_{1}^{-1}G^\T V_{\rvx}^{i}.
  \end{align*}
Plugging the Hamiltonian with optimal control $\rvu^{*}_i$ back to (eq.\ref{eq:HJB}), we can derive the HJB equation for the $i$th representative agent:
\begin{equation}\label{eq:HJB-with-OC}
  \begin{aligned}
    &V_t^i+\frac{1}{2}\Tr(V_{\rvx\rvx}^i\Sigma\Sigma^\T)+V_{\rvx}^{i^\T} F-\frac{1}{2}||\rmU_i^{*}||_2^2\\
    &+\rmX_i^\T S_1 \rmX_i+S_2\rmX_i=0 \\
    &\text{s.t} \ \  V^{i}(\rmX,T)=\gC_i(\rmX,T).\\
  \end{aligned}
\end{equation}
\subsection{FBSDE theory and Important sampling (IS)}
The HJB PDE (eq.\ref{eq:HJB-with-OC}) can be related to a set of FBSDEs by applying the nonlinear Feynman-Kac Lemma \cite{karatzas2012brownian}:
\begin{equation}
   \begin{aligned}
     &\rd \rmX_i=F\rd t+\Sigma \rd \rmW_t^i, \quad \quad \quad \quad \quad \quad \quad \quad \quad  \ \ \quad \quad  \quad \text{(FSDE)},\\
     &\rd \rmY_i=-(\rmX_i^\T S_1 \rmX_i+S_2^\T\rmX_i-\frac{1}{2}||\rmU_i^{*}||_2^2)\rd t+\rmZ_t^{i} \rd \rmW_t   \text{(BSDE)},\\
     &\text{s.t} \quad \rmX_{t_0}=\rvx_{t_0}, \rmY_{T}=\gC_i(\rmX_T).
   \end{aligned}
 \end{equation}
 Here we follow classical notations appeared in FBSDE theory literature: $\rvY_i \equiv V_i$ is value function and  $\rmZ_t^{i}\equiv \Sigma^\T V_{\rvx}^{i}$ is known as the adjoint state. In the FBSDE formulation, the solution of backward process (BSDE) corresponds to the solution of HJB equation. Inspired by \cite{exarchos2018stochastic}, the FBSDE with important sampling (IS) can be further established as:
 \begin{equation}\label{eq:IS-FBSDE}
  \begin{aligned}
    &\rd \rmX_i=(F+G\bar{\rmU}_i)\rd t+\Sigma \rd \rmW_t, \quad \quad \quad \quad \quad \quad \quad  \text{(FSDE)},\\
    &\rd \rmY_i=\left[-(\rmX_i^\T S_1 \rmX_i+S_2^\T\rmX_i-\frac{1}{2}||\rmU_i^{*}||_2^2)+\rmZ_t^{i^\T} \Phi \bar{\rmU}\right]\rd t\\
    &\quad \quad \quad \quad \quad \quad \quad \quad \quad \quad \quad \quad +\rmZ_t^{i^\T} \rd \rmW_t  \quad \quad \quad \ \   \text{(BSDE)},\\
    &\text{s.t} \quad \rmX_{t_0}=\rvx_{t_0}, \rmY_{T}=\gC_i(\rmX_T),
  \end{aligned}
\end{equation}
where we assume $G(\rmX,t)$ can be decompose as $G(\rmX,t)=\Sigma(\rmX,t)\Phi(\rmX,t)$ \cite{pereira2019learning}. The IS formulation allows us to modify the FSDE, while the BSDE still solves the original HJB PDE (eq.\ref{eq:HJB-with-OC}) almost surely. Due to the mathematical flexibility, nominal control policy $\bar{\rmU_i}$ can be any arbitrary control. Here, we adopt the control policy estimated in the previous run/iteration of the algorithm. Theoretically supported by \cite{chen2021large}, IS is able to increase exploration which is critical for value function estimation on non-equilibrium states.
\begin{algorithm}[h]
  \caption{Deep Graphic FBSDE}
  \begin{algorithmic}[1]\label{algo}
  \STATE \textbf{Hyper-parameters}: $N$: Number of players; $T$: Number of timesteps; $N_{gd}$: Number of gradient descent steps; $B$: Batch size; 
  \STATE \textbf{Parameters}: $\phi$: Network weights for Initial Value (IV) prediction $\rmY^i_{0}=f_{\rmY}(\cdot;\phi)$; $\theta$: Network weights for prediction $\rmZ^i_t=f_{\rmZ}(\cdot;\theta)$.
  \STATE Initialize trainable papermeters:$\theta^{0}$, $\phi^{0}$
  \STATE Generate $B$ sample $\rvx_0$ and $B\times T$ Noise.
  \STATE \textbf{repeat}
  \STATE Running parallel for (sub)superscript $i$.
  \FOR{$l \leftarrow 0$ to $N_{gd}-1$}
  \FOR{$t \leftarrow 0$ to $T-1$}
  \IF{$t=0$}
  \STATE Sampling $B$ number of initial states $\rvx_0$.
  \STATE Predict value function for $i$th player: $\rmY^i_0=f_{\rmY}(\rmX_0;\phi)$
  \ELSE
  \STATE Compute network prediction: $\rmZ^i_t=f_{\rmZ}(\rmX_t;\theta)$
  \ENDIF
  \STATE Compute $i$th optimal control:$\rmU^{*}_i=R^{-1}\Phi^\T \rmZ^i_t$
  \STATE Propagate FSDE in (eq.\ref{eq:IS-FBSDE}):
  \STATE $\rd \rmX_i=(F(\rmX_t,t)+G(\rmX_t,t)\rmU^{*}_i)\rd t+\Sigma \rd \rmW_t^i$ 
  \STATE Propagate BSDE in (eq.\ref{eq:IS-FBSDE}):
  \STATE $\rd \rmY_i=[-(\rmX_i^\T S_1 \rvx_i+S_2^\T\rmX_i-\frac{1}{2}||\rmU_i^{*}||_2^2) $ \\
   ~~~~~ $+ \rmZ_t^{i^\T} \Phi \rmU ] \rd t+\rmZ_t^{i^\T} \rd \rmW_t$ 
  \ENDFOR
  \STATE Compute True terminal value $\rmY^{*}_i(\rmX,T)=\gC_i(\rmX_T,T)$
  \STATE Compute loss $\norm{\rmY^{*}_i(\rmX,T)-\rmY_{i}(\rmX,T)}$
  \STATE Gradient Update: $\theta,\phi$
  \ENDFOR
  \STATE \textbf{until} convergence
\end{algorithmic}
\end{algorithm}
\begin{figure}%
  \centering
  \includegraphics[width=0.8\linewidth]{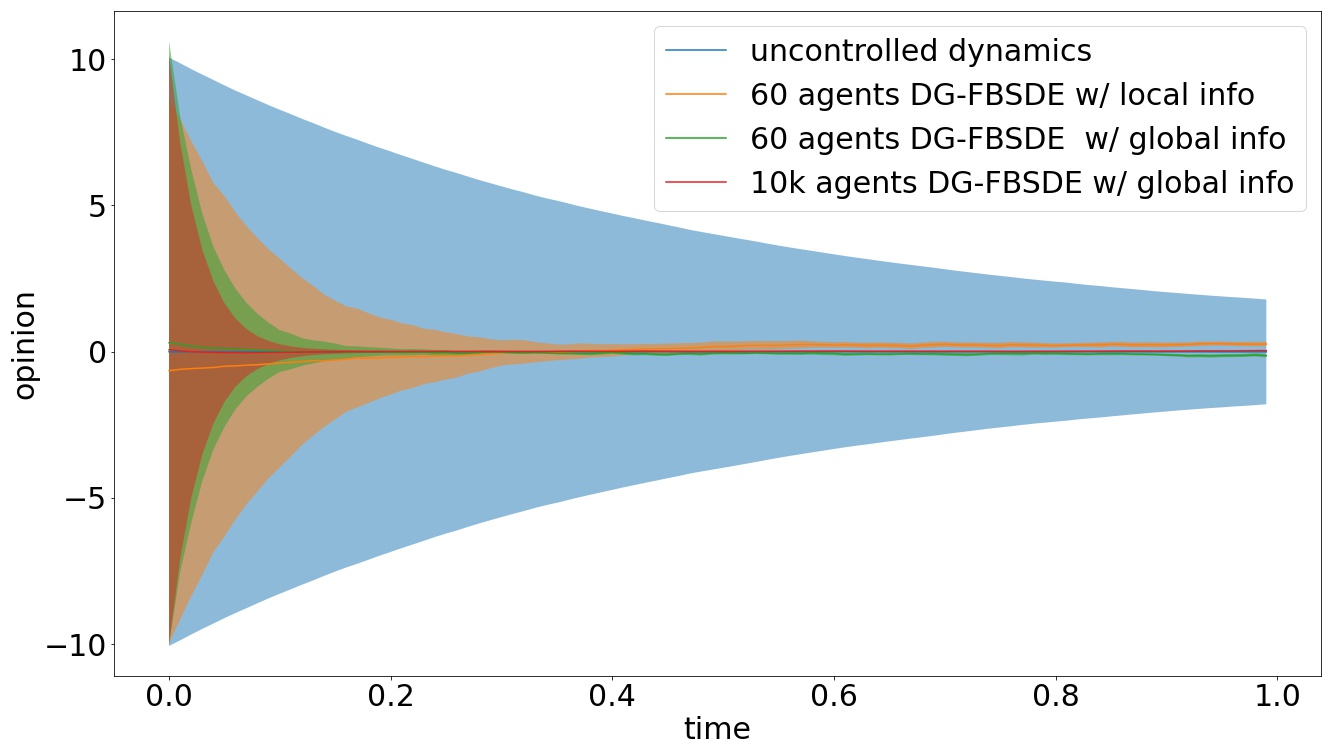} 
  \caption{Validation of our DG-FBSDE on different levels of information. The solid line and shaded color are mean and variance of agents' states. The legend are ordered as the increasing information level in which 10K agents with global information has richest information of the system. One can notice the converge rate of consensus has positive correlation with the information level.}\label{fig:MF_compare}
\end{figure}
\section{Deep Graphic FBSDE (DG-FBSDE)}
\begin{figure*}[h]%
  \centering
  \hspace*{-3.7cm} 
  \includegraphics[width=2.8\columnwidth]{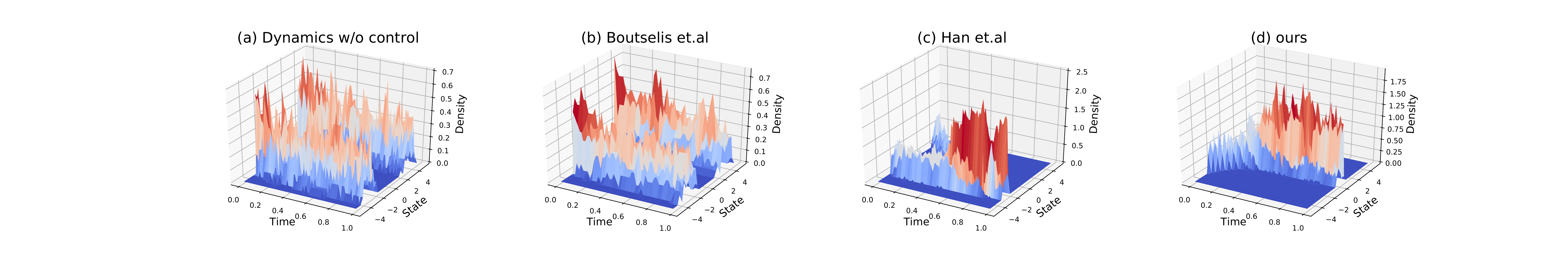}
  \caption{Opinion trajectories with different policies. \textbf{(a)}: The agents fail to reach consensus without controls. \textbf{(b)}:\cite{boutselis2020constrained} cannot drive agents to reach agreement because of the information limitation. \textbf{(c)}:\cite{han2020convergence} can drive agent to the consensus, but it is heavily biased. \textbf{(d)}: When the dynamics equipped with DG-FBSDE, the agents are able to reach neutral consensus.}\label{fig:bimodal}
\end{figure*}
Utilizing aforementioned results, we propose a novel Deep Graphic FBSDE (DG-FBSDE) framework to solve stochastic optimal control problems with local graphic neighborhood information. Notably, the framework can be extended to the system with global information smoothly.

\textbf{Algorithm:} The algorithm is designed in the continuous time horizon $0\leq t \leq T$ during which FSDE and BSDE are propagated via Euler integration scheme (see fig.\ref{fig:NN-arch}: Orange and blue path are FSDE and BSDE respectively). In Algorithm.\ref{algo}, the neural network is aiming to approximate initial value function $\rmY_0^{i}$ and $\rmZ_t^{i}$ component at each timesteps. The loss function for the neural network is defined as the $\gL_2$ norm of the difference between predicted value function $\rmY_T^{i}$ induced by BSDE, and the true terminal value function $\rmY_T^{i*}$ determined by terminal cost function $\gC_i$. The neural network can be trained by standard gradient based optimizers such as Adam \cite{kingma2014adam}.

\textbf{Network Architecture:}
The network architecture is illustrated in fig.\ref{fig:NN-arch}. We use Residual Neural Network with time embedding \cite{chen2021likelihood} as the  neural network backbone. The neural network is sharing the same parameters over timesteps. In this work we consider  agents that are symmetric. This  means that they have identical objective functions (eq.\ref{eq:soc-obj}) and dynamics (eq.\ref{eq:orig_prob_local}). We assign all the agents to the same neural network architecture and then follow the process of   Centralized Training and Decentralized Executing (CTDE) scheme that is used in the Reinforcement Learning literature. By eliminating potentially redundant parameters, we only need to maintain one neural network in fig.\ref{fig:NN-arch} for all agents which paves the way for generalizing trained policy to larger number of agents without additional training.
\section{SIMULATION RESULTS}
\begin{figure}%
  \centering
  \includegraphics[width=\linewidth]{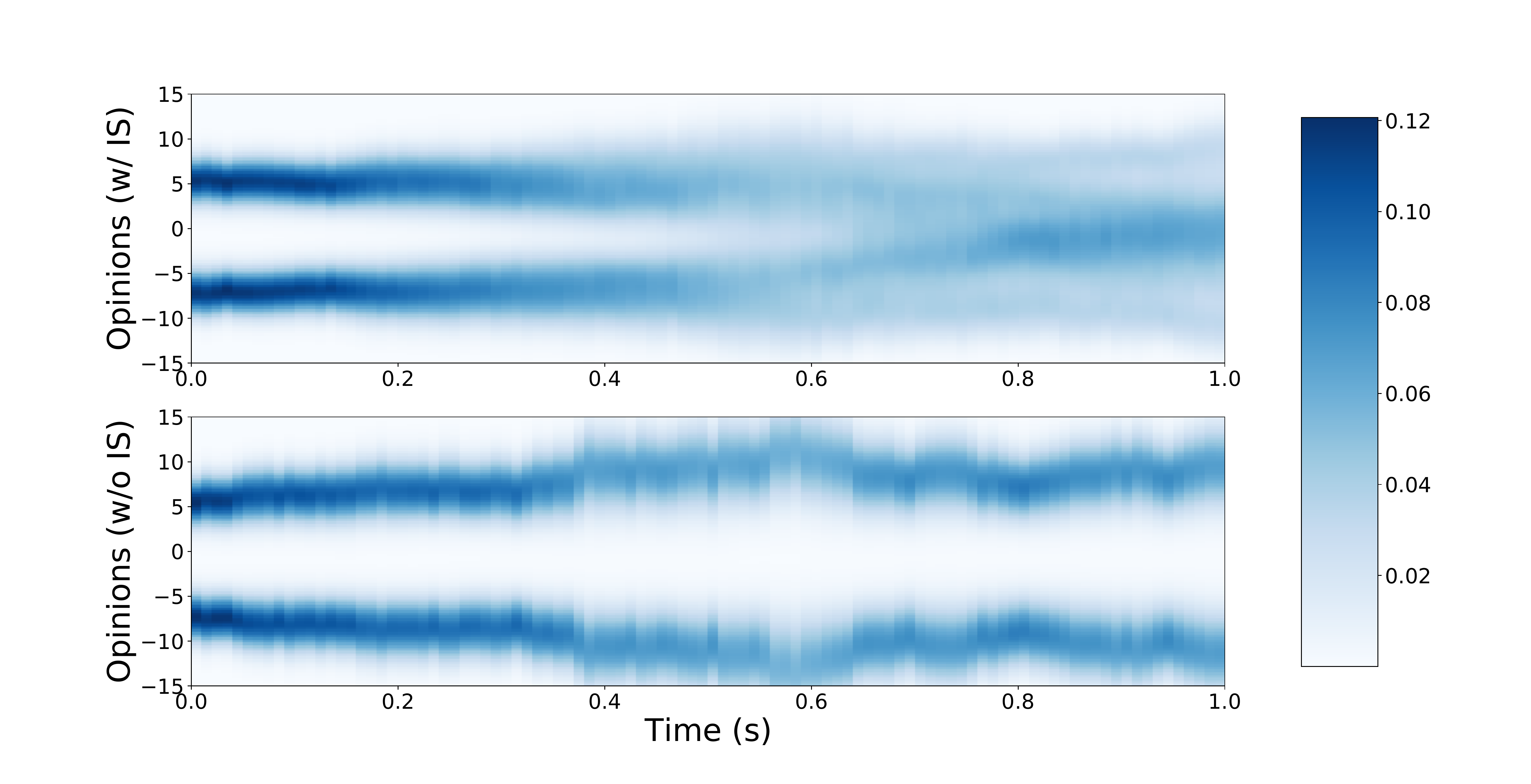} 
  \caption{The difference of sampled opinion region between opinion dynamics with an without important sampling. \cite{han2020convergence} fails to sample neutral regions. By contrast, our model with important sampling is able to cover broader state space which potentially lead to favorable neutral consensus policy.}\label{fig:IS}
\end{figure}
\begin{figure}[h]%
  \centering
  \includegraphics[width=\linewidth]{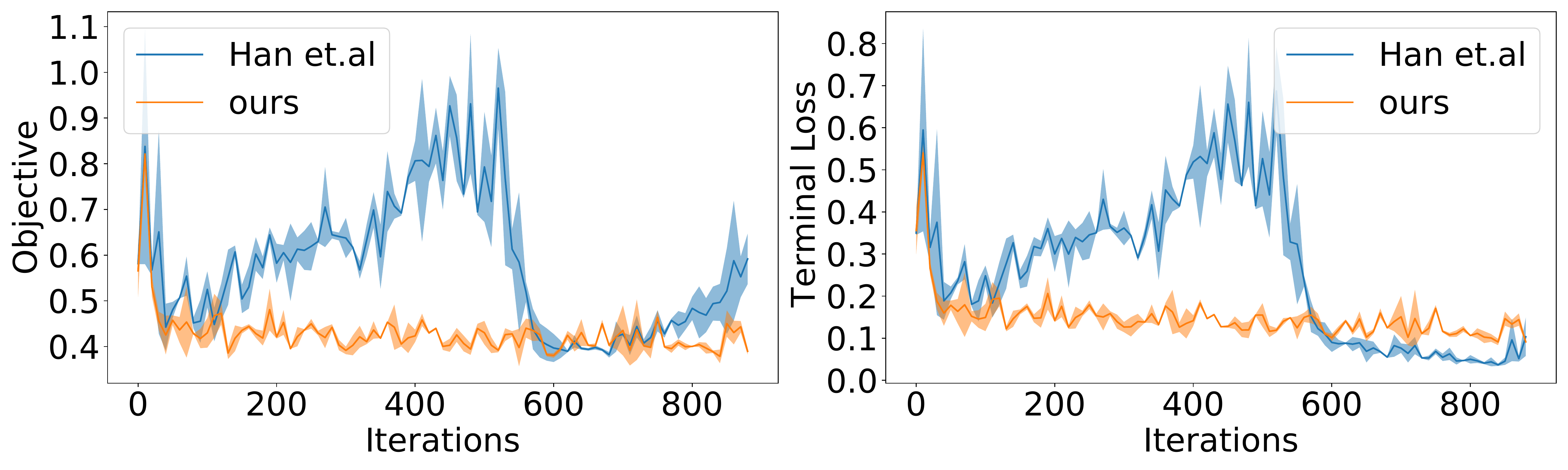}
  \caption{Comparison of baseline \cite{han2020convergence} and our algorithm w.r.t objective (eq.\ref{eq:soc-obj}) and terminal loss. Our model shows more stable training and is able to find a neutral consensus even though both algorithm have close minima.}\label{fig:bimodal_loss}
\end{figure}
\textbf{Setup:} We tested our algorithm on three different opinion scenarios to illustrate that agents will reach consensus uniformly under control induced by DG-FBSDE. For the unimodal Gaussian initialization, we empirically show the applicability of our algorithm for different scopes of information, and surprisingly, DG-FBSDE can be scaled up to 10k agents with global information. For the polarized opinion case, in which opinion states are initialized as bi-modal distribution, the system will not reach consensus without controls. We demonstrate that our algorithm is able to obtain a more favorable \emph{neutral} consensus policy compared  to the work \cite{han2018solving} which uses PDEs solver leveraging the power of deep learning  as well as the work in \cite{boutselis2020constrained}  that is  based on stochastic optimization methods applied to free-energy types of cost functions. Lastly, we show the superior generalization ability of our algorithm by deploying trained policy to larger number of agents \emph{without} further fine-tuning under four-modal opinion initialization, and demonstrate how the controlled agents influence the dynamics of uncontrolled agents. The resulting plots are averaging over 3 repeated independent runs. Solid line and shadow region are mean and standard deviation. The hyperparameters are $\alpha=1$, $\epsilon=2$, $\Sigma=I_N$ unless otherwise noted. For unimodal case, the initial states are sampled from uniform distribution with range of $[-10,10]$. For bimodal case, the initial states are sampled uniformly in the range of $[-8,-4]$ and $[4,8]$ in order to guarantee the agent from the other modal distribution will not appear in the neighborhood. For four modal case, the initial states are sampled uniformly from the range of $[-4,-3]$, $[-1,0]$, $[1.5,2.5]$, and $[4,5]$. The center of bias term is defined as $f(\sA_i)=1/|\sA_i|\sum_{x_{k} \in \sA_{i}} x_k$.
\subsection{Unimodal Consensus}
We first consider the unimodal Gaussian initialization for agents' opinions, and the dynamics of opinions is following eq.\ref{eq:orig_prob_local}. The evolution of opinions without control (eq.\ref{eq:opinion-dyn}) is shown in fig.\ref{fig:MF_compare} in bluer. One can notice that opinions reach consensus gradually due to the interaction component $f(\sA_i)$ in the dynamics. The opinion arrives agreement rapidly by applying DG-FBSDE controls with local information. When the DG-FBSDE has the access to the global information, such as other agents' opinion, the consensus rate will be increased accordingly as being shown in fig.\ref{fig:MF_compare}. Our algorithm can scale up to 10K agents while maintaining superior performance due to the data-driven deep neural network model.

\subsection{Consensus in the phenomenon of Polarization}
In the real world, the phenomenon of polarization among agents are attracting increasing attention \cite{gaitonde2021polarization}. In this section, we consider a polarized opinion scenario in which, the neighborhood radius $\epsilon$ is set to $2$ such that agents are unable to reach agreement without controls. In fig.\ref{fig:bimodal}(a), the opinions of agents are not able to reach consensus since the influence of opinion from the other modal distribution will never present in the $\epsilon$-neighborhood. As a result, \cite{boutselis2020constrained} cannot find a policy which drives agents to the agreement shown in fig.\ref{fig:bimodal}.(b). Prior work \cite{han2018solving} can find the consensus policy which is shown in fig.\ref{fig:bimodal}.(c), but the terminal consensus state is largely biased to one of the opinion. Despite of the difficulties of the problem formulation, DG-FBSDE still manages to induce a neutral consensus policy which is demonstrated in fig.\ref{fig:bimodal}.(d). fig.\ref{fig:bimodal_loss} illustrates that, even though both \cite{han2018solving} and our model is able to minimize the objective function, our model can find a more favorable neutral policy. We suspect that the performance difference is introduced by the important sampling of FBSDEs. In fig.\ref{fig:IS}, one can notice that the dynamics without important sampling is unable to explore the neutral consensus region (in the middle). On the contrary, dynamics with IS has better exploration which is including the desirable consensus region.
\begin{figure}[h]%
  \centering
  \includegraphics[width=\linewidth]{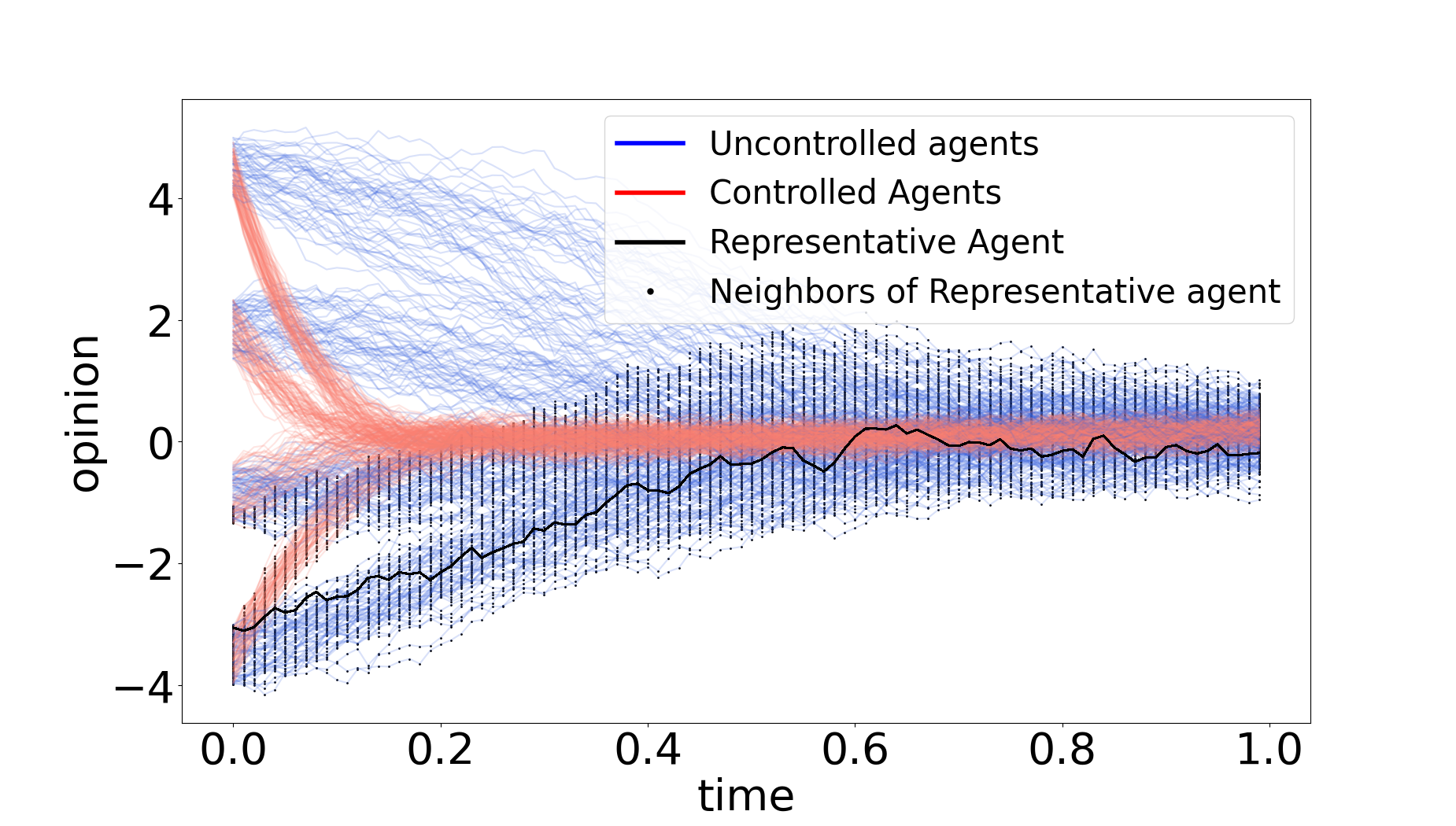}
  \caption{State trajectories demonstrating generalization capability of our model. 250 controlled agents are using DG-FBSDE policy which is trained with 60 agents. The system is able to reach the consensus even half (250) of agents are uncontrolled.}\label{fig:four_mode}
\end{figure}
\subsection{Generalization of DG-FBSDE}
Neural networks are well-known for their generalization ability. In this section, we show that, the policy is applicable to larger number of agents even though it is trained by only small number agents. We use 60 agents in the training phase to obtain the DG-FBSDE policy. Motivated by the scenario where some of agents are not directly controllable in the real world, We simulate the total of 500 agents but half of them are uncontrollable. The another half of agents are equipped with DG-FBSDE policy trained on a 60 agents example. In fig.\ref{fig:four_mode}, one can find that the controlled agents can reach the consensus as expected and the uncontrolled dynamics is merging to the consensus opinion by the influence of neighbors. This experiment verifies the generalization of our model and paves the way for the future in extremely large scale applications. In practice, We can instead to train on smaller number of agents and drop the expensive training when the number of agents is huge.

\section{Conclusion}
In this work, we first derive the HJB PDE when MF term with local information appears  in the state cost functional and dynamics in consideration. We propose the novel DG-FBSDEs algorithm to solve such HJB PDE and verify the performance of our algorithm in different experiment setups. Our algorithm achieves  favorable performance when the phenomena of polarization occurs compared with \cite{han2020convergence} and \cite{boutselis2020constrained}. We showcase that the proposed algorithm generalizes to the case when the global information is available for which cases the consensus rate is accelerated. Lastly, as a deep learning framework, our model is able to generalize to larger number of agents without further fine-tuning. These last property creates opportunities  for  future practical applications of the proposed algorithm to extremely large scale opinion dynamics models.
\addtolength{\textheight}{-3cm}   

\section*{ACKNOWLEDGMENTS}

This work is supported by the DoD Basic Research Office Award HQ00342110002

\bibliographystyle{unsrt}
\bibliography{reference.bib}








\end{document}